\newtheorem{thm}{Theorem}
\newtheorem{prb}{Problem}
\newtheorem{cor}{Corollary}
\newcommand{\ie}{{\it i.e.},\ }
\newcommand{\removelatexerror}{\let\@latex@error\@gobble}
\begin{document}
\title{Distributed Cooperative Caching for VoD with Geographic Constraints}

\author{
\IEEEauthorblockN{Konstantin Avrachenkov}
\IEEEauthorblockA{INRIA\\ Sophia Antipolis, France\\
konstantin.avrachenkov@inria.fr}
\and
\IEEEauthorblockN{Jasper Goseling}
\IEEEauthorblockA{Mathematics of Operations Research\\University of Twente, The Netherlands\\
j.goseling@utwente.nl}
\and
\IEEEauthorblockN{Berksan Serbetci\IEEEauthorrefmark{1}}
\IEEEauthorblockA{EURECOM\\Sophia Antipolis, France\\
serbetci@eurecom.fr\\
\IEEEauthorrefmark{1}Corresponding author.}
}
\maketitle

\maketitle

\begin{abstract}
We consider caching of video streams in a cellular network in which each base station is equipped with a cache. Video streams are partitioned into multiple substreams and the goal is to place substreams in caches such that the residual backhaul load is minimized. We consider two coding mechanisms for the substreams: Layered coding (LC) mechanism and Multiple Description Coding (MDC). We develop a distributed asynchronous algorithm for deciding which files to store in which cache to minimize the residual bandwidth, \ie the cost for downloading the missing substreams of the user's requested video with a certain video quality from the gateway (\ie the main server). We show that our algorithm converges rapidly. Also, we show that MDC partitioning is better than LC mechanism when the most popular content is stored in caches; however, our algorithm diminishes this difference.
\end{abstract}

\section{Introduction}
\label{sec:intro}
Data traffic in cellular networks is rapidly expanding and one of the main reasons is the explosive growth in the demand for Video-on-Demand (VoD) content. Recent studies show that existing network infrastructures will not be able to support the demand~\cite{cisco}. One of the bottlenecks will be formed by the backhaul links that connect base stations to the core network and we need to utilize these links as efficiently as possible. A promising means to efficiently handle this issue is to proactively cache data in the base stations. The idea is to store part of the popular data at the wireless edge and use the backhaul to refresh the stored data at off-peak hours since users' demand distribution over time is varying slowly. In order to embody this, content and network service providers are deploying Video Hub Offices (VHOs) serving as caches in metropolitan areas to serve VoD subscribers in that area. In this way, caches containing popular content serve as helpers to the overall system and decrease the backhaul load.

Our goal in this paper is on developing low-complexity distributed and asynchronous content placement algorithms for VoD. As content and network service providers would like to optimize the cached content in caches (\ie base stations, or more specifically VHOs) while keeping the communication in the network to a minimum, the problem is of practical relevance in cellular networks. Our low-complexity distributed and asynchronous algorithm is designed to optimize the content and keeping the communication between caches to a minimum as the caches exchange information only locally.

Let us first provide a brief discussion of the related works. Caching has been receiving a lot of attention in the literature lately. Here we provide a brief overview of the work on caching that is most closely related to the current paper. A general outline of a distributed caching architecture for wireless networks considering a network model in which a bipartite graph indicates how users are connected to base stations has been presented in~\cite{SGDMC13}. Works in~\cite{AAG14,ABG16,BG15,SG18} consider placement of base stations in the plane according to a stochastic geometry where the main aim is to find the optimal placement strategy for the network for various performance measures. In~\cite{AGS17} we have developed a distributed asynchronous algorithm that deals with the miss probability minimization by casting the problem into the framework of potential games for such the networks where caches are located at arbitrary locations, and showed that our algorithm converges to a Nash equilibrium, and in fact to the best Nash equilibrium in most practical scenarios. Finally in~\cite{AAGLR16}, authors present an approach for optimal content placement for a large-scale VoD by formulating the problem as a mixed integer problem. The main difference between~\cite{AGS17, AAGLR16} and the present work is the considered performance measures, the concept of video partitioning and the heterogeneity of the video popularities over the network. For a more thorough review of the existing caching work, the reader is strongly encouraged to refer to papers~\cite{AGS17, PITTC18}.

In the remainder of the introduction we will give an overview of the model and contributions. 

We consider a network with caches located at arbitrary locations in the plane. Caches know their own coverage area as well as the coverage areas of other caches that overlap with this region. There is a content catalog from which users request videos with a certain video quality according to a known probability distribution. Videos are partitioned into chunks (\ie substreams) by using two coding mechanisms: Layered coding (LC) mechanism~\cite{WSBL03} and Multiple Description Coding (MDC)~\cite{G01}. Each cache can store a limited number of chunks and the goal is to minimize the residual bandwidth, \ie the cost for downloading the missing video chunks of the users' requested videos with certain video qualities from the main server. We develop low-complexity asynchronous distributed cooperative content placement caching algorithms that require communication only between caches with overlapping coverage areas. In the basic algorithm, at each iteration a cache will selfishly update its cache content by minimizing the local residual bandwidth and by considering the content stored by neighbouring caches. We show that our basic algorithm can be formulated as providing the best response dynamics in a potential game. We show that our algorithm is guaranteed to converge to a Nash equilibrium in finite time. We analyze the structure of the best response dynamics and provide the closed-form optimal solutions for the content placement by solving the linear optimization problem. Finally, we illustrate our results by a number of numerical results with various video quality request distributions for real world network models.

To specify, our contributions are as follows:
\begin{itemize}
\item We provide the partitioning models for the two aforementioned coding mechanisms and formulate problems in order to minimize the residual bandwidth for both mechanisms;
\item We provide a distributed asynchronous algorithm for optimizing the content placement which can be interpreted as giving the best response dynamics in a potential game for both partitioning mechanisms;
\item We prove that the best response dynamics can be obtained as a solution of a linear optimization problem;
\item We prove that our algorithm converges in finite time, and in fact rapidly;
\item We evaluate our algorithm through numerical examples using the base station locations from a real wireless network for the cellular network topology. We study the residual bandwidth evolution on a real network numerically for both coding mechanisms and show that our distributed caching algorithm strongly suggests using the LC mechanism for video partitioning in VoD systems;
\end{itemize}

Let us outline the organization of the paper.
In Section~\ref{sec:model} we give the formal model and problem definitions. In Section~\ref{sec:potentialgame} we provide the game formulation of the problem, analyze the structures of the best response dynamics and Nash equilibria.In Section~\ref{sec:performance}, we present practical implementations of our low-complexity algorithms for various video quality distributions and show the resulting performances for a real network. In Section~\ref{sec:discussion} we conclude the paper with some discussions and provide an outlook on future research.
%
%
%
%
%
\section{Model and Problem Definition}
\label{sec:model}
We consider a network of $N$ base stations that are located in the plane $\mathbb{R}^2$. We will use the notation $[1:N]=\{1,\dots,N\}$ and $\Theta = \mathbb{P}\left([1:N]\right) \setminus \emptyset$, where $\mathbb{P}\left([1:N]\right)$ is the power set of $[1:N]$. We specify the geometric configuration of the network through $A_s$, $s \in \Theta$, which denotes the area of the plane that is covered only by the caches in subset $s$, namely $A_s = (\cap_{\ell \in s} \bar{A}_\ell) \cap (\cap_{\ell \not \in s} \bar{A}_\ell^c)$, where $\bar{A}_\ell$ is the complete coverage region of cache $\ell$.

As a special case we will consider the case that all base stations have the same circular coverage region with radius $r$. In this case we specify the location of each base station, with $x_m$ for the location of base station $m\in[1:N]$. We then obtain $\bar{A}_{m}$ as the disc of radius $r$ around $x_m$.

Each base station is equipped with a cache that can be used to store videos from a content library $\mathcal{C}_v = \left\{c_{1}, \dots, c_{J}\right\}$, where $J < \infty$. Each element $c_{j}$ represents a video, where $j$ is the file (video) index. The size of video $c_j$ is denoted by $w_{j}$.

Next, videos are partitioned into chunks. We assume that video $c_j$ is partitioned into $Q$ chunks and consequently, the chunk library consists of video chunks $\mathcal{C}_c = \left\{c_{1,1}, \dots, c_{1,Q},\dots, c_{J,1},\dots, c_{J,Q}\right\}$, where $J, Q < \infty$. Each element $c_{j,q}$ represents a video chunk, where $j$ is the file (video) index and $q$ is the chunk index. The chunk size for the element $c_{j,q}$ is denoted by $w_{j,q}$.

The motivation behind partitioning videos into chunks is as follows. In this work we will consider various coding mechanisms. If the video is partitioned by using layered coding (LC) mechanisms (e.g., MPEG-2, MPEG-4, etc.), there is a base layer and enhancement layers. The base layer is necessary for the media stream to be decoded. Accordingly, the further enhancement layers are applied to improve the video quality. Hence, for the layered coding mechanism, the element $c_{j,1}$ represents the base layer for video $j$, and $c_{j,2}$ is the first enhancement layer required to obtain a better video quality, and so forth. For the LC mechanism, $Q = Q_{LC}$ and the chunk size $w_{j,q}$ will then depend on the average base size of the different video qualities, and the total size of the video $j$ with the highest video quality is equal to
\begin{equation}
w_j = \sum_{q = 1}^{Q_{LC}} w_{j,q}. \label{LCwsum}
\end{equation}

Another coding mechanism that we will consider is called multiple description coding (MDC). For MDC, the video stream is partitioned into multiple substreams, referred to as descriptions. Any description can be used to decode the media stream. As the number of received substreams increases, the received video quality increases. For MDC, since there is no hierarchical layered coding system, each received unique description is equally beneficial to increase the video quality. For the MDC mechanism, $Q = Q_{MDC}$ and the chunk size $w_{j,q}$ is constant for any chunk $q$ for fixed video $j$, and the total size of the video $j$ with the highest video quality is equal to
\begin{equation}
w_j = Q_{MDC} w_{j,q}. \label{MDCwsum}
\end{equation}

Our interest is in users located in the plane over the area that is covered by the base stations, \ie uniformly distributed in $A_{cov} = \cup_{s \in \Theta} A_s.$ The probability of a user in the plane being covered by caches $s\in\Theta$ (and is not covered by additional caches) is denoted by $p_s = \vert A_s\vert / \vert A_{cov}\vert$. A user located in $A_s$, $s \in \Theta$ can connect to all caches in subset $s$, and has access to all video chunks stored in these caches.

We assume that there are $R$ video quality levels. The probability that video $j$ is requested with video quality $\rho$ by a user at $x_u \in s$ is denoted by $a_{j,\rho,s}$. Here $\rho = 1$ refers to the video with the lowest quality and $\rho = R$ refers to the case where the video has the highest quality. There is an obvious relation between the required chunks and the desired video quality. For the LC mechanism, files can be divided into $Q_{LC} = R$ layers with the proper size selections ($w_{j,q}$) and receiving all layers gives the video with the highest quality. On the other hand, for the MDC mechanism, in practice $Q_{MDC}$ is not equal to $R$ since all chunks have the equal size. We define a new parameter $D_\rho$ that represents the required number of chunks to get the video quality $\rho$.

We assume that the distribution for the requested video's quality depends on the user's location. To validate this assumption, one might think that users travelling in the bus will more likely request watching videos in a lower resolution compared to those located in residential areas. Similarly, users located in a residential area may request watching videos in different qualities as some users may watch videos via their ultra high-definition televisions, whereas others watch videos via their mobile phones with a lower resolution. We conclude that the requested video quality for a user located in $s\in\Theta$ follows a known and fixed probability mass function (pmf) $f^{(s)}(\rho)$.

For the file popularities, we assume that $a_1 \geq a_2 \geq \dots \geq a_J$ for any video quality $\rho$. We assume that the video popularities do not vary as $\rho$ changes. Even though any popularity distribution can be used, most of our numerical results will be based on the Zipf distribution for the video popularities for any video quality $\rho$. For any video quality indicator $\rho$, the probability that a user will ask for content $j$ is equal to
\begin{equation}
a_j = \frac{j^{-\gamma}}{\sum_{j=1}^J j^{-\gamma}}, \label{zipfpars}
\end{equation}
where $\gamma > 0$ is the Zipf parameter. 

Since $a_j$ and $f^{(s)}(\rho)$ represent statistically independent random variables, we conclude that the probability that video $j$ is requested by a user located at $x_u \in s$ with video quality $\rho$ is equal to 
\begin{equation}
a_{j,\rho,s} = \frac{j^{-\gamma}}{\sum_{j=1}^J j^{-\gamma}} f^{(s)}(\rho).
\end{equation}

Content is placed in caches using knowledge of the request statistics $a_{j,\rho,s}$, but without knowing the actual request made by the user.
We denote the placement policy for cache $m$ as
\begin{align}
b_{j,q}^{(m)} := \left\{
\begin{array}{rl}
1, & \text{if } c_{j,q} \text{ is stored in cache $m$},\\
0, & \text{if } c_{j,q} \text{ is not stored in cache $m$},
\end{array} \right.
\end{align}
and the overall placement strategy for cache $m$ as
\begin{equation*}
\mathbf{B}^{(m)} = 
\begin{bmatrix}
b_{1,1}^{(m)} & \ldots & b_{J,1}^{(m)}\\
\vdots & \ddots & \vdots\\
b_{1,Q}^{(m)} & \ldots & b_{J,Q}^{(m)}
\end{bmatrix}
\end{equation*}
as a $Q \times J$ matrix. 
The overall placement strategy for the network is denoted by $\mathbf{B} = \left[\mathbf{B}^{(1)}; \dots; \mathbf{B}^{(N)}\right]$ as an $Q \times J \times N$ three-dimensional matrix.

Caches have capacity $K$, \ie
\begin{equation*}
\sum_{j=1}^{J}\sum_{q=1}^Q w_{j,q} b_{j,q}^{(m)} \leq K, \forall m.
\end{equation*}
For clarity of presentation, we assume homogeneous capacity for the caches. However, our work can immediately be extended to the network topologies where caches have different capacities (\ie for the case where cache $m \in [1:N]$ has capacity $K_{m}$).

The user requests to watch one of the videos from the content library with certain quality. The aim is to place content in the caches ahead of time in order to minimize the residual bandwidth, \ie the cost for downloading the missing video chunks of the user's requested video with a certain video quality from the gateway (\ie the main server) in order to serve the users.

In the next subsections, we are interested in designing a cache placement strategy that minimizes the cost for downloading the user's requested video with a certain quality from the gateway for LC and MDC mechanisms.
\subsection{Layered coding mechanism}
For LC mechanism, the residual bandwidth $f_{LC}(\mathbf{B})$ is equal to
\begin{equation}
f_{LC}\left(\mathbf{B}\right) = \sum_{s \in \Theta}\sum_{j = 1}^J \sum_{\rho = 1}^R p_s a_{j,\rho,s} \left[\sum_{q=1}^\rho w_{j,q} \prod_{\ell \in s}\left(1 - b_{j,q}^{(\ell)}\right)\right].
\label{LCbw}
\end{equation}

Our goal is to find the optimal placement strategy minimizing the residual bandwidth as follows:
\begin{prb}
\label{LCprb}
\begin{align}
&\min \text{ } f_{LC}\left(\mathbf{B}\right)\nonumber\\
&\text{ }\mathbf{s.t.}\quad  \sum_{j = 1}^J \sum_{q = 1}^Q w_{j,q} b_{j,q}^{(m)} \leq K, \text{ }\forall m,\label{capacityconstraint}\\
&\hspace{0.9cm}b_{j,q}^{(m)} \in \{0,1\}, \text{ } \forall j, q, m.\label{sizeconstraint}
\end{align}
\end{prb}

It is easy to verify that Problem~\ref{LCprb} is not convex. We will provide a distributed asynchronous algorithm to address Problem~\ref{LCprb} in which we iteratively update the placement policy at each cache. In~\cite{AGS17} we have showed for a different performance measure that we can define an algorithm that can be viewed as the best response dynamics in a potential game. We will present a similar algorithm here. We make use of the following notation. Denote by $\mathbf{B}^{(-m)}$ the placement policies of all caches except cache $m$. We will write $f_{LC}(\mathbf{B}^{(m)},\mathbf{B}^{(-m)})$ to denote $f_{LC}\left(\mathbf{B}\right)$. Also, for the sake of simplicity for the potential game formulation that will be presented in the following section, let $f_{LC}^{(m)}$ denote the residual bandwidth of cache $m$, \ie
\begin{align}
f_{LC}^{(m)}\left(\mathbf{B}\right) &= \sum_{\substack{s \in \Theta \\ m \in s}} \sum_{j = 1}^J \sum_{\rho = 1}^R \sum_{q=1}^\rho p_s a_{j,\rho,s} w_{j,q} \left(1-b_{j,q}^{(m)}\right)
\prod_{\ell \in s \setminus \{m\}}\left(1 - b_{j,q}^{(\ell)}\right)\nonumber\\
&= \sum_{\substack{s \in \Theta \\ m \in s}} \sum_{j = 1}^J \sum_{\rho = 1}^R \sum_{q=1}^\rho p_s a_{j,\rho,s} w_{j,q} \left(1-b_{j,q}^{(m)}\right) \zeta^{(m)}\left(j, q\right),
\label{LClocbw}
\end{align}
where
\begin{equation} \label{eq:zeta}
\zeta^{(m)}\left(j, q\right) = \prod_{\ell \in s \setminus \{m\}}\left(1 - b_{j,q}^{(\ell)}\right).
\end{equation}

\subsection{Multiple description coding}
In MDC, receiving all $Q_{MDC}$ descriptions will provide the highest expected video quality experience for the users. For the sake of simplicity, we define a new parameter $y_{j,\rho,s}$ denoting the total number of descriptions that needs to be fetched from the backhaul in order to satisfy the expected video quality $\rho$ for video $j$ in $s\in\Theta$ as
\begin{equation}
y_{j,\rho,s} = \max\left\{0, D_\rho - Q + \sum_{q=1}^Q \prod_{\ell \in s}\left(1-b_{j,q}^{(\ell)}\right)\right\}.\label{yfuncstructure}
\end{equation}

For MDC, the residual bandwidth $f_{MDC}(\mathbf{B})$, \ie the cost for downloading the missing chunks of the user's requested video from the gateway is equal to

\begin{align}
f_{MDC}\left(\mathbf{B}\right) &= \frac{1}{Q}\sum_{s \in \Theta}\sum_{j = 1}^J \sum_{\rho = 1}^R p_s a_{j,\rho,s} w_j y_{j,\rho,s}.
\label{MDCbw}
\end{align}

Our goal is to find the optimal placement strategy minimizing the total number of descriptions that needs to be fetched from the backhaul, equivalently minimizing the residual bandwidth as follows:
\begin{prb}
\label{MDCprb}
\begin{align}
&\min \text{ } f_{MDC}\left(\mathbf{B}\right)\nonumber\\
&\text{ }\mathbf{s.t.}\quad  \sum_{j = 1}^J \sum_{q = 1}^Q w_j b_{j,q}^{(m)} \leq KQ, \text{ }\forall m,\label{MDCcapacityconstraint}\\
&\hspace{0.9cm}b_{j,q}^{(m)} \in \{0,1\}, \text{ } \forall j, q, m,\label{MDCsizeconstraint}.
\end{align}
\end{prb}

As in the LC mechanism, Problem~\ref{MDCprb} is not convex either. Therefore, we will provide a distributed asynchronous algorithm to address Problem~\ref{MDCprb} (similar to the procedure we followed for the layered mechanism) in which we iteratively update the placement policy at each cache. We will make use of the following notation. Denote by $\mathbf{B}^{(-m)}$ the placement policies of all caches except cache $m$. We will write $f_{MDC}(\mathbf{B}^{(m)},\mathbf{B}^{(-m)})$ to denote $f_{MDC}\left(\mathbf{B}\right)$. Also, for the sake of simplicity for the potential game formulation that will be presented in the following section, let $f_{MDC}^{(m)}$ denote the residual bandwidth of cache $m$, \ie
\begin{align}
f_{MDC}^{(m)}&\left(\mathbf{B}\right) = \frac{1}{Q}\sum_{\substack{s \in \Theta \\ m \in s}}\sum_{j = 1}^J \sum_{\rho = 1}^R p_s a_{j,\rho,s} w_j \max\left\{0, D_\rho - Q + \sum_{q=1}^Q \left(1-b_{j,q}^{(m)}\right)\zeta^{(m)}\left(j, q\right)\right\},
\label{MDClocbw}
\end{align}
where $\zeta^{(m)}\left(j, q\right)$ is given in~\eqref{eq:zeta}.

\section{Potential Game Formulation}
\label{sec:potentialgame}
In this section we provide a distributed asynchronous algorithm to address Problems~\ref{LCprb} and~\ref{MDCprb} in which we iteratively update the placement policy at each cache. We will show that this algorithm can be formulated as providing the best response dynamics in a potential game.

In our algorithm, each cache tries selfishly to minimize its residual bandwidth $f_{LC}^{(m)}$ and $f_{MDC}^{(m)}$ defined in~\eqref{LClocbw} and~\eqref{MDClocbw} for LC and MDC mechanisms respectively. Given a placement $\mathbf{B}^{(-m)}$ by the other caches, cache $m$ solves for $\mathbf{B}^{(m)}$ in

\begin{prb}
\label{prb:LCloc}
\begin{align}
&\min \text{ } f^{(m)}_{LC}\left(\mathbf{B}^{(m)},\mathbf{B}^{(-m)}\right)\nonumber\\
&\text{ }\mathbf{s.t.}\quad  \sum_{j = 1}^J \sum_{q = 1}^Q w_{j,q} b_{j,q}^{(m)} \leq K,\label{capacityconstraint2}\\
&\hspace{0.9cm}b_{j,q}^{(m)} \in [0,1], \text{ } \forall j, q,\label{sizeconstraint2}
\end{align}
\end{prb}
for LC mechanism and,

\begin{prb}
\label{prb:MDCloc}
\begin{align}
&\min \text{ } f_{MDC}^{(m)}\left(\mathbf{B}^{(m)},\mathbf{B}^{(-m)}\right)\nonumber\\
&\text{ }\mathbf{s.t.}\quad  \sum_{j = 1}^J \sum_{q = 1}^Q w_j b_{j,q}^{(m)} \leq KQ, \text{ }\forall m,\label{MDCcapacityconstraint2}\\
&\hspace{0.9cm}b_{j,q}^{(m)} \in [0,1], \text{ } \forall j, q,\label{MDCsizeconstraint2}
\end{align}
\end{prb}
for MDC mechanism.

Each cache continues to optimize its placement strategy until no further improvements can be made. At this point $\mathbf{B}$ is a \emph{Nash equilibrium strategy} that minimizes the overall residual bandwidth satisfying

\begin{equation}
f^{(m)}_{LC}\left(\mathbf{B}^{(m)},\mathbf{B}^{(-m)}\right) \leq f^{(m)}_{LC}\left(\bar{\mathbf{B}}^{(m)},\bar{\mathbf{B}}^{(-m)}\right), \text{ }\forall m, \mathbf{B}^{(m)},
\end{equation}
for LC mechanism and,
\begin{equation}
f^{(m)}_{MDC}\left(\mathbf{B}^{(m)},\mathbf{B}^{(-m)}\right) \leq f^{(m)}_{MDC}\left(\bar{\mathbf{B}}^{(m)},\bar{\mathbf{B}}^{(-m)}\right), \text{ }\forall m, \mathbf{B}^{(m)},
\end{equation}
for MDC mechanism.

For both coding mechanisms, we will refer to these games as the \emph{content placement games} and demonstrate in the next subsections that both games are potential games~\cite{MS96} with many nice properties.

\subsection{Convergence analysis}
In this section, we will prove for both coding mechanisms that if the caches repeatedly update their placement strategies it is guaranteed to converge to a Nash equilibrium in finite time. Note that the order of the updates is not important as long as all caches are scheduled infinitely often.

\begin{thm}
The content placement games defined by payoff functions~\eqref{LClocbw} and~\eqref{MDClocbw} are potential games with the potential function given in~\eqref{LCbw} and~\eqref{MDCbw}, respectively. For both games, if we schedule each cache infinitely often, the best response dynamics converges to a Nash equilibrium in finite time.
\end{thm}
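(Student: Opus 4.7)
The plan is to prove the two claims in turn: first, the exact-potential structure of each game; second, finite-time convergence of best-response dynamics.

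For the LC case, I would split the outer sum in~\eqref{LCbw} according to whether $m \in s$:
\begin{equation*}
f_{LC}(\mathbf{B}) = \sum_{\substack{s \in \Theta \\ m \in s}} (\cdots) + \sum_{\substack{s \in \Theta \\ m \notin s}} (\cdots).
\end{equation*}
In the second sum, the product $\prod_{\ell \in s}\bigl(1-b^{(\ell)}_{j,q}\bigr)$ contains no factor depending on $b^{(m)}_{j,q}$, so that term is constant in $\mathbf{B}^{(m)}$. In the first sum, pulling the $\ell = m$ factor out of the product yields $\bigl(1-b^{(m)}_{j,q}\bigr)\zeta^{(m)}(j,q)$, and the remainder coincides term-by-term with $f^{(m)}_{LC}(\mathbf{B})$ from~\eqref{LClocbw}. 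Hence for every $\mathbf{B}^{(m)},\bar{\mathbf{B}}^{(m)}$,
\begin{equation*}
f_{LC}\bigl(\mathbf{B}^{(m)},\mathbf{B}^{(-m)}\bigr) - f_{LC}\bigl(\bar{\mathbf{B}}^{(m)},\mathbf{B}^{(-m)}\bigr) = f^{(m)}_{LC}\bigl(\mathbf{B}^{(m)},\mathbf{B}^{(-m)}\bigr) - f^{(m)}_{LC}\bigl(\bar{\mathbf{B}}^{(m)},\mathbf{B}^{(-m)}\bigr),
\end{equation*}
which is the exact-potential identity of Monderer and Shapley.

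For MDC the same splitting works once one observes that $y_{j,\rho,s}$ depends on $\mathbf{B}$ only through the inner quantity $\sum_q \prod_{\ell \in s}\bigl(1-b^{(\ell)}_{j,q}\bigr)$; for $s \ni m$ this equals $\sum_q \bigl(1-b^{(m)}_{j,q}\bigr)\zeta^{(m)}(j,q)$, so the $m \in s$ portion of~\eqref{MDCbw} matches~\eqref{MDClocbw} term-by-term, while the $m \notin s$ portion is independent of $\mathbf{B}^{(m)}$. This is the step I expect to be the most delicate, because the outer $\max\{0,\cdot\}$ in $y_{j,\rho,s}$ is nonlinear; the key point is that the same splitting is carried out inside the $\max$ for both $\mathbf{B}^{(m)}$ and $\bar{\mathbf{B}}^{(m)}$ in the difference, so the nonlinearity is harmless and the exact-potential identity survives.

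Finally, for finite-time convergence I would invoke the standard potential-game argument. The joint integer strategy space $\{0,1\}^{Q \times J \times N}$ is finite (and even if one works with the box-relaxed local problems of~\eqref{capacityconstraint2} and~\eqref{MDCcapacityconstraint2}, a best response to the linear local objective is attained at a vertex of the corresponding fractional knapsack polytope, of which there are only finitely many), so the potential function ranges over a finite set of values. Each best-response update weakly decreases the potential, and strictly decreases it whenever the updating cache was not already at a best response. A strictly decreasing sequence in a finite set has only finitely many terms, so after finitely many updates no scheduled cache can improve its own payoff, which is exactly the Nash-equilibrium condition; because every cache is scheduled infinitely often, any player not yet at a best response will be updated within a bounded number of further steps, completing the proof.
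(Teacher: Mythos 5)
Your proof is correct and takes essentially the same route as the paper's: the exact-potential identity for each game (which the paper merely asserts and you verify by splitting the sum in the potential over $s \ni m$ versus $s \not\ni m$ and pulling the $\ell=m$ factor out of the product), followed by the standard finite-improvement argument for potential games. The only difference is that you supply the term-by-term verification the paper declares ``trivial and skipped,'' including the correct observation that the $\max\{0,\cdot\}$ in the MDC objective causes no trouble because the matching happens inside it.
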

\begin{proof}
In order to show that both games are potential with the potential functions $f_{LC}\left(\mathbf{B}\right)$ and $f_{MDC}\left(\mathbf{B}\right)$, it is easy to verify that
\begin{align*}
f^{(m)}_{LC}\left(\bar{\mathbf{B}}^{(m)},\mathbf{B}^{(-m)}\right) - f^{(m)}_{LC}\left(\mathbf{B}^{(m)},\mathbf{B}^{(-m)}\right)= f_{LC}\left(\bar{\mathbf{B}}^{(m)},\mathbf{B}^{(-m)}\right) - f_{LC}\left(\mathbf{B}^{(m)},\mathbf{B}^{(-m)}\right),
\end{align*}
and
\begin{align*}
f^{(m)}_{MDC}\left(\bar{\mathbf{B}}^{(m)},\mathbf{B}^{(-m)}\right) - f^{(m)}_{MDC}\left(\mathbf{B}^{(m)},\mathbf{B}^{(-m)}\right)= f_{MDC}\left(\bar{\mathbf{B}}^{(m)},\mathbf{B}^{(-m)}\right) - f_{MDC}\left(\mathbf{B}^{(m)},\mathbf{B}^{(-m)}\right),
\end{align*}
which completes the proof of the first statement. The proof states that the improvement in the residual bandwidth by the best response after each update is equal to the improvement in the residual bandwidth in the overall network. The detailed analysis is trivial and skipped due to space constraints. 

Now, since there exists only a finite number of placement strategies, none of the caches will be missed in the long-run. Moreover, each non-trivial best response provides a positive improvement in the potential function in a potential game. Hence, we are guaranteed to converge to a Nash equilibrium in finite time for both games.
\end{proof}

\subsection{Structure of the best response dynamics}
\label{subsec:BR}
In this subsection we will analyze the structure of the best response dynamics. We will show that solutions to Problem~\ref{prb:LCloc} and to Problem~\ref{prb:MDCloc} can be obtained by solving a linear optimization problem and we provide the solutions in closed form.

For notational purposes, let us first define the set $\Omega =\left\{(j,q),\left(j\in[1,J]\right), \left(q\in[1:Q]\right)\right\}$ that consists of all video index-video quality tuples.

\begin{thm}
\label{LCsol}
The optimal solution to Problem~\ref{prb:LCloc} is given by
\begin{align}
\label{LCoptsoleq}
\bar{b}^{(m)}_{j,q}  = \left\{
\begin{array}{rl}
1, & \text{if } \pi^{-1}_m(j,q) \leq K/w_{j,q},\\
0, & \text{if } \pi^{-1}_m(j,q) > K/w_{j,q},\\
\end{array} \right.
\end{align}
where $\pi_m: \Omega \rightarrow [1:J \times Q]$ satisfies
\begin{align*}
&\sum_{\substack{s \in \Theta \\ m \in s}} \sum_{\rho = \pi_m(1,2)}^R  p_s a_{\pi_m(1,1),\rho,s} w_{\pi_m(1,1),\pi_m(1,2)}\zeta^{(m)}\left(\pi_m(1,1),\pi_m(1,2)\right)\\
&\geq \sum_{\substack{s \in \Theta \\ m \in s}} \sum_{\rho = \pi_m(2,2)}^R  p_s a_{\pi_m(2,1),\rho,s} w_{\pi_m(2,1),\pi_m(2,2)}\zeta^{(m)}\left(\pi_m(2,1),\pi_m(2,2)\right)\\
&\geq \dots\\
&\geq \sum_{\substack{s \in \Theta \\ m \in s}} \sum_{\rho = \pi_m(J\times Q,2)}^R  p_s a_{\pi_m(J\times Q,1),\rho,s} w_{\pi_m(J\times Q,1),\pi_m(J\times Q,2)}\zeta^{(m)}\left(\pi_m(J\times Q,1),\pi_m(J\times Q,2)\right),
\end{align*}
where $\pi_m(i,1)$ is the file index $j$ of the i$^{th}$ sorted tuple, and $\pi_m(i,2)$ is the video quality index $q$ of the i$^{th}$ sorted tuple.
\end{thm}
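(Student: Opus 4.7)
The plan is to exploit the fact that $f^{(m)}_{LC}(\mathbf{B}^{(m)},\mathbf{B}^{(-m)})$ is linear in the decision variables $b^{(m)}_{j,q}$, since by definition $\zeta^{(m)}(j,q)$ depends only on $\mathbf{B}^{(-m)}$. Combined with the fact that Problem~\ref{prb:LCloc} has already relaxed integrality to $b^{(m)}_{j,q}\in[0,1]$, the local problem is a linear program with a single knapsack-type constraint \eqref{capacityconstraint2} plus box constraints. My first step would therefore be to commute the sums in \eqref{LClocbw} using the identity $\sum_{\rho=1}^R\sum_{q=1}^{\rho}=\sum_{q=1}^{Q}\sum_{\rho=q}^{R}$, so that the coefficient of $b^{(m)}_{j,q}$ in $-f^{(m)}_{LC}$ reads
\begin{equation*}
v^{(m)}_{j,q} \;=\; \sum_{\substack{s\in\Theta\\ m\in s}}\sum_{\rho=q}^{R} p_s\, a_{j,\rho,s}\, w_{j,q}\, \zeta^{(m)}(j,q),
\end{equation*}
which is exactly the quantity that appears inside the ordering defining $\pi_m$ in the theorem statement.

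With this rewriting, minimizing $f^{(m)}_{LC}$ is equivalent to maximizing $\sum_{j,q} v^{(m)}_{j,q}\, b^{(m)}_{j,q}$ subject to $\sum_{j,q} w_{j,q}\, b^{(m)}_{j,q}\le K$ and $b^{(m)}_{j,q}\in[0,1]$. This is a textbook continuous (fractional) knapsack LP, and the classical greedy rule gives the optimum: sort the pairs $(j,q)$ by decreasing profit rate and fill the knapsack in that order. The map $\pi_m$ in the theorem is precisely such a sorted ordering, and the rule \eqref{LCoptsoleq} packs items in this order until the budget $K$ is saturated; I would verify by a standard exchange argument (swapping any two items and checking that the objective cannot improve) that this greedy selection solves the LP, and then check by direct substitution that it satisfies \eqref{capacityconstraint2}.

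The main obstacle I anticipate is a small conceptual gap: the ordering in the theorem sorts by $v^{(m)}_{j,q}$ itself rather than by the rate $v^{(m)}_{j,q}/w_{j,q}$, and the prescribed solution is strictly $\{0,1\}$-valued even though the LP relaxation a priori allows one fractional item at the capacity boundary. I expect both issues to dissolve under the implicit assumption that chunk weights are uniform on the relevant prefix of $\pi_m$ (so that ranking by $v^{(m)}_{j,q}$ and by $v^{(m)}_{j,q}/w_{j,q}$ coincide, and so that the threshold $K/w_{j,q}$ is the correct integer cutoff). If one wishes to treat genuinely heterogeneous weights, the cleanest route is to write the KKT conditions, identify the Lagrange multiplier $\lambda$ of the capacity constraint, and check that $b^{(m)}_{j,q}=1$ iff $v^{(m)}_{j,q}\ge \lambda w_{j,q}$, which after rearrangement reproduces the rank-based criterion in \eqref{LCoptsoleq}; this is the step I would spend the most care on.
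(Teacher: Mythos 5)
Your route is essentially the paper's route: the authors also observe that Problem~\ref{prb:LCloc} is a linear program and solve it through the Lagrange dual, introducing threshold functions $\Psi_{j,q}(\lambda)$ that set $b^{(m)}_{j,q}=1$ when $\lambda$ lies below the coefficient $v^{(m)}_{j,q}=\sum_{s\ni m}\sum_{\rho=q}^{R}p_s a_{j,\rho,s}w_{j,q}\zeta^{(m)}(j,q)$, and then choosing $\bar\lambda$ so that $\Psi(\bar\lambda)=K$. This is precisely the KKT/dual argument you describe as the ``cleanest route,'' so in substance the two proofs coincide; your preliminary step of commuting $\sum_{\rho=1}^{R}\sum_{q=1}^{\rho}$ into $\sum_{q=1}^{Q}\sum_{\rho=q}^{R}$ is also exactly what makes the coefficient $v^{(m)}_{j,q}$ and hence the ordering $\pi_m$ appear.

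The two obstacles you flag are genuine, and you should be aware that the paper's own proof does not resolve them either. Stationarity of the Lagrangian gives $b^{(m)}_{j,q}=1$ iff $\lambda w_{j,q}<v^{(m)}_{j,q}$, i.e., a test on the ratio $v^{(m)}_{j,q}/w_{j,q}$, whereas both the theorem's ordering and the paper's $\Psi_{j,q}$ compare against $v^{(m)}_{j,q}$ itself; these agree only when the $w_{j,q}$ are constant over the relevant tuples, which fails for the LC layer sizes used later in the paper. Likewise, the rank cutoff $\pi^{-1}_m(j,q)\le K/w_{j,q}$ correctly encodes ``the prefix fits in the budget'' only for uniform weights, and the possibility of one fractional item at the capacity boundary (the relaxation \eqref{sizeconstraint2} is over $[0,1]$) is silently ignored when the paper asserts $\Psi(\bar\lambda)=K$ exactly. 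So your proposal, carried out carefully via the ratio test, proves a corrected version of the statement rather than the statement verbatim; that is a feature of your write-up, not a defect, but you should state the uniform-weight (or divisibility) hypothesis explicitly rather than leave it implicit.
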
 
\begin{proof}
The objective function $f_{LC}^{(m)}\left(\mathbf{B}\right)$ is an affine function and we have several inequality constraints. It is easy to verify that we also have affine functions in the constraints. Therefore, we can conclude that Problem~\ref{prb:LCloc} is a linear problem. It can then be easily solved using dual problem. We have the Lagrange dual function
\begin{align*}
&\inf_{\mathbf{B}}\left(\sum_{\substack{s \in \Theta \\ m \in s}} \sum_{j = 1}^J \sum_{\rho = 1}^R \sum_{q=1}^\rho p_s a_{j,\rho,s} w_{j,q} \left(1-b_{j,q}^{(m)}\right) \zeta^{(m)}\left(j, q\right) +\lambda\left(\sum_{j = 1}^J \sum_{q = 1}^Q w_{j,q} b_{j,q}^{(m)} - K\right) - \sum_{j=1}^J \sum_{q = 1}^Q \mu_{j,q} b^{(m)}_{j,q} +\right.\\
&\left. \sum_{j=1}^J \sum_{q = 1}^Q \nu_{j,q} \left(b^{(m)}_{j,q} - 1\right)\right).
\end{align*}
For notational convenience we introduce the functions $\Psi_{j,q}: \mathbb{R}\rightarrow[0,1]$, $j = 1,\dots,J$, and $q = 1,\dots,Q$ as follows
\begin{align}
\label{LCPsi}
\Psi_{j,q}(\lambda)  = \left\{
\begin{array}{rl}
1,\hspace{0.1cm}\text{if } \lambda < \sum_{\substack{s \in \Theta \\ m \in s}} \sum_{\rho = q}^R  p_s a_{j,\rho,s} w_{j,q}\zeta^{(m)}\left(j,q\right),\\
0,\hspace{0.1cm}\text{if } \lambda \geq \sum_{\substack{s \in \Theta \\ m \in s}} \sum_{\rho = q}^R  p_s a_{j,\rho,s} w_{j,q}\zeta^{(m)}\left(j,q\right).\\
\end{array} \right.
\end{align}
We also define $\Psi: \mathbb{R}\rightarrow[0,K]$, where $\Psi(\lambda) = \sum_{j=1}^J \sum_{q=1}^Q w_{j,q} \Psi_{j,q}(\lambda) = K$ for 
$$
\lambda \in \left(-\infty,\sum_{\substack{s \in \Theta \\ m \in s}} \sum_{\rho = q}^R  p_s a_{j,\rho,s} w_{j,q}\zeta^{(m)}\left(j,q\right)\right),
$$
and $\Psi(\lambda) = \sum_{j=1}^J \sum_{q=1}^Q w_{j,q} \Psi_{j,q}(\lambda) = 0$ for
$$
\lambda \in \left[\sum_{\substack{s \in \Theta \\ m \in s}} \sum_{\rho = q}^R  p_s a_{j,\rho,s} w_{j,q}\zeta^{(m)}\left(j,q\right),\infty\right).
$$
It is then possible to check all possible combinations from the video content-video quality tuples $\Omega \rightarrow [1:J \times Q]$ to confirm if the condition given in~\eqref{LCPsi} is satisfied. In order to satisfy the capacity constraint~\eqref{capacityconstraint2} the above solution is guaranteed to exist. The proof is completed by validating that with the strategy above, $\bar{\lambda}$ solving the Lagrange dual is satisfying $\Psi(\bar{\lambda}) = K$.
\end{proof}
The intuition behind Theorem~\ref{LCsol} is as follows. The ordering is simply made based on the weights of the tuples consisting of the probability of being located in a specific region, the summation coming from the effect of the difference between requests with different video qualities (and the summation index difference due to the effect of the layered coding mechanism), and the effect of the possibility of receiving the chunks from the neighbours.

\begin{thm}
\label{MDCsol}
The optimal solution to Problem~\ref{prb:MDCloc} is given by
\begin{align}
\label{MDCoptsoleq}
\bar{b}^{(m)}_{j,q}  = \left\{
\begin{array}{rl}
1, & \text{if } \pi^{-1}_m(j,q) \leq KQ/w_j,\\
0, & \text{if } \pi^{-1}_m(j,q) > KQ/w_j,\\
\end{array} \right.
\end{align}
where $\pi_m: \Omega \rightarrow [1:J \times Q]$ satisfies
\begin{align*}
&\sum_{\substack{s \in \Theta \\ m \in s}} \sum_{\rho = 1}^R p_s a_{\pi_m(1,1),\rho, s} w_{\pi_m(1,1)} \zeta^{(m)}\left(\pi_m(1,1), \pi_m(1,2)\right)\mathbbm{1}\left(D_\rho - Q + \sum_{i = 1}^Q \zeta^{(m)}(\pi_m(1,1),i) > 0\right)\\
&\geq \sum_{\substack{s \in \Theta \\ m \in s}} \sum_{\rho = 1}^R p_s a_{\pi_m(2,1),\rho, s} w_{\pi_m(2,1)}\zeta^{(m)}\left(\pi_m(2,1), \pi_m(2,2)\right)\mathbbm{1}\left(D_\rho - Q + \sum_{i = 1}^Q \zeta^{(m)}(\pi_m(2,1),i) > 0\right)\\
&\geq\dots\\
&\geq \sum_{\substack{s \in \Theta \\ m \in s}} \sum_{\rho = 1}^R p_s a_{\pi_m(J\times Q,1),\rho, s} w_{\pi_m(J\times Q,1)}\zeta^{(m)}\left(\pi_m(J\times Q,1),\pi_m(J\times Q,2)\right)\\
&\mathbbm{1}\left(D_\rho - Q + \sum_{i = 1}^Q \zeta^{(m)}(\pi_m(J\times Q,1),i) > 0\right),
\end{align*}
where $\pi_m(i,1)$ is the file index $j$ of the i$^{th}$ sorted tuple, and $\pi_m(i,2)$ is the video quality index $q$ of the i$^{th}$ sorted tuple., and $\mathbbm{1}(.)$ is the indicator function, \ie it is equal to $1$ if the condition is satisfied and $0$ otherwise.
\end{thm}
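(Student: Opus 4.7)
The plan is to extend the Lagrangian/LP approach of Theorem~\ref{LCsol} to the piecewise-linear MDC objective. Fix $\mathbf{B}^{(-m)}$; since $\zeta^{(m)}(j,q)\in\{0,1\}$ depends only on the neighbours, so does the quantity
\[
\delta_{j,s,\rho} := D_\rho - Q + \sum_{i=1}^{Q}\zeta^{(m)}(j,i).
\]
The inner term of~\eqref{MDClocbw} rewrites as
\[
\max\Bigl\{0,\,\delta_{j,s,\rho}-\sum_{q=1}^{Q} b^{(m)}_{j,q}\zeta^{(m)}(j,q)\Bigr\},
\]
so any triple $(s,j,\rho)$ with $\delta_{j,s,\rho}\le 0$ contributes zero to $f^{(m)}_{MDC}$ for every $\mathbf{B}^{(m)}\in[0,1]^{J\times Q}$ and can be discarded; the surviving triples are exactly those carrying the indicator in the theorem statement.

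On this active set, the marginal value obtained by flipping $b^{(m)}_{j,q}$ from $0$ to $1$ is
\[
v_{j,q} \;=\; \sum_{\substack{s\in\Theta\\ m\in s}}\sum_{\rho=1}^{R} p_s\, a_{j,\rho,s}\, w_j\, \zeta^{(m)}(j,q)\, \mathbbm{1}\!\left(\delta_{j,s,\rho}>0\right),
\]
which is precisely the ranking weight in the statement (the overall factor $1/Q$ in~\eqref{MDClocbw} does not affect the ordering). Problem~\ref{prb:MDCloc} then reduces to the linear knapsack of maximising $\sum_{j,q} v_{j,q} b^{(m)}_{j,q}$ subject to $\sum_{j,q} w_j b^{(m)}_{j,q}\le KQ$ and $b^{(m)}_{j,q}\in[0,1]$. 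I would mirror the Lagrangian derivation of Theorem~\ref{LCsol}: introduce a multiplier $\lambda$ for the capacity constraint and $\mu_{j,q},\nu_{j,q}$ for the box, define $\Psi_{j,q}(\lambda)=\mathbbm{1}(v_{j,q}>\lambda w_j)$, choose $\bar\lambda$ so that $\sum_{j,q} w_j\,\Psi_{j,q}(\bar\lambda)=KQ$, and read off the threshold rule~\eqref{MDCoptsoleq} by complementary slackness, with $\pi_m$ being the sort of the $(j,q)$ tuples in decreasing order of $v_{j,q}$.

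The hard part will be justifying the replacement of the $\max$ by a purely linear marginal. For a fixed video $j$ and a fixed active $(s,\rho)$, caching successive chunks of $j$ at node $m$ with $\zeta^{(m)}(j,q)=1$ reduces the max one unit at a time, but only until it saturates at zero; hence $v_{j,q}$ is, a priori, an over-estimate of the true marginal once several chunks of $j$ have already been included, and a fully rigorous proof must exclude the possibility that the greedy sort~\eqref{MDCoptsoleq} over-caches video $j$ at the expense of a more valuable tuple $(j',q')$. I expect to close this gap through an exchange argument at any LP extreme point: swapping a cached $(j,q)$ with an uncached $(j',q')$ of strictly larger $v$ can only weakly increase the total reduction in the piecewise-linear objective, because the ``saturation loss'' suffered by the removed chunk is dominated by the gain of the added one. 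Once this monotonicity is established, KKT verification and integrality of the resulting solution proceed exactly as in the proof of Theorem~\ref{LCsol}.
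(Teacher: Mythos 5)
Your overall route---fix $\mathbf{B}^{(-m)}$, discard the triples with $\delta_{j,s,\rho}\le 0$, reduce to a knapsack over the tuples $(j,q)$, and recover the threshold rule from the Lagrange multiplier of the capacity constraint---is the same route the paper takes: its proof writes down the identical Lagrangian, defines threshold functions $\Phi_{j,q}(\lambda)$ gated by the condition $D_\rho-Q+\sum_i\zeta^{(m)}(j,i)>0$, and reads off the sorted solution from $\Phi(\bar\lambda)=K$. The difference is that you explicitly flag the step the paper silently skips: the objective is \emph{not} linear in $\mathbf{B}^{(m)}$ because of the $\max$, and the weight $v_{j,q}$ in the ranking is only the marginal value of the \emph{first} chunk of video $j$ placed at $m$. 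The true marginal of the $k$-th such chunk is $\sum_{s,\rho}p_s a_{j,\rho,s}w_j\,\mathbbm{1}\bigl(D_\rho-Q+\sum_i\zeta^{(m)}(j,i)\ge k\bigr)$, which is nonincreasing in $k$ and can drop to zero while $v_{j,q}$ stays large. You are right that this is the crux, and the paper's proof does not address it at all.

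However, your proposed repair does not close the gap. The exchange argument as sketched compares the first-chunk weights $v_{j,q}$ and $v_{j',q'}$, but the actual change in the objective when you swap is (loss of the $k$-th marginal of $j$) versus (gain of the $(k'+1)$-st marginal of $j'$), and \emph{both} of these are saturated quantities: the gain from adding $(j',q')$ can be strictly smaller than $v_{j',q'}$, indeed zero, if enough chunks of $j'$ are already cached. So $v_{j',q'}>v_{j,q}$ does not imply the swap helps, and the monotonicity you need is false in general. A concrete failure mode: if $D_\rho-Q+\sum_i\zeta^{(m)}(j,i)=1$ for every active $(s,\rho)$ of the most popular video $j$, one cached chunk of $j$ already zeroes its residual, yet the rule~\eqref{MDCoptsoleq} assigns all $Q$ of its chunks the same top weight and fills the cache with them before touching video $j+1$. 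The correct way to handle this is to exploit that the savings are separable and concave in the per-video counts $n_j=\sum_q b^{(m)}_{j,q}\zeta^{(m)}(j,q)$, so an incremental greedy that re-evaluates marginals after each placement (equivalently, a ranking whose indicator is $\mathbbm{1}(D_\rho-Q+\sum_i\zeta^{(m)}(j,i)\ge k)$ for the $k$-th chunk of $j$) solves the LP relaxation; but that yields a different ordering from the one in the theorem statement, so this is a gap in the theorem as written, not merely in your write-up.
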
 
\begin{proof}
The objective function $f_{MDC}^{(m)}\left(\mathbf{B}\right)$ is a linear function and we have several inequality constraints. It is easy to verify that we also have linear functions in the constraints. Therefore, we can conclude that Problem \ref{prb:MDCloc} is a linear problem. It can then be easily solved using dual problem. We have the Lagrange dual function
\begin{align*}
&\inf_{\mathbf{B}}\left(\sum_{\substack{s \in \Theta \\ m \in s}}\sum_{j = 1}^J \sum_{\rho = 1}^R p_s a_{j,\rho,s} w_j\max\left\{0, D_\rho - Q + \sum_{q=1}^Q \left(1-b_{j,q}^{(m)}\right)\zeta^{(m)}\left(j, q\right)\right\}+\lambda\left(\sum_{j = 1}^J \sum_{q = 1}^Q w_{j} b_{j,q}^{(m)} - KQ\right)\right.\\ 
&\left.- \sum_{j=1}^J \sum_{q = 1}^Q \mu_{j,q} b^{(m)}_{j,q} + \sum_{j=1}^J \sum_{q = 1}^Q \nu_{j,q} \left(b^{(m)}_{j,q} - 1\right)\right).
\end{align*}
For notational convenience we introduce the functions $\Phi_{j,q}: \mathbb{R}\rightarrow[0,1]$, $j = 1,\dots,J$, and $q = 1,\dots,Q$ as follows
\begin{align}
\label{MDCPhi}
\Phi_{j,q}(\lambda)  = \left\{
\begin{array}{rl}
1, & \text{if } \lambda < \sum_{\substack{s \in \Theta \\ m \in s}} \sum_{\rho = q}^R  p_s a_{j,\rho,s} \zeta^{(m)}\left(j,q\right),\\
0, & \text{if } \lambda \geq \sum_{\substack{s \in \Theta \\ m \in s}} \sum_{\rho = q}^R  p_s a_{j,\rho,s} \zeta^{(m)}\left(j,q\right),\\
\end{array} \right.
\end{align}
when $D_\rho - Q + \sum_{i=1}^Q \zeta^{(m)}\left(j, i\right) > 0$.
We also define $\Phi: \mathbb{R}\rightarrow[0,K]$, where $\Phi(\lambda) = \sum_{j=1}^J \sum_{q=1}^Q w_j \Phi_{j,q}(\lambda)/Q = K$ for 
$$
\lambda \in \left(-\infty,\sum_{\substack{s \in \Theta \\ m \in s}} \sum_{\rho = q}^R  p_s a_{j,\rho,s} w_{j,q}\zeta^{(m)}\left(j,q\right)\right),
$$
and $\Phi(\lambda) = \sum_{j=1}^J \sum_{q=1}^Q w_j \Phi_{j,q}(\lambda) = 0$ for
$$
\lambda \in \left[\sum_{\substack{s \in \Theta \\ m \in s}} \sum_{\rho = q}^R  p_s a_{j,\rho,s} w_{j,q}\zeta^{(m)}\left(j,q\right),\infty\right).
$$
It is then possible to check all possible combinations from the video content-video quality tuples $\Omega \rightarrow [1:J \times Q]$ to confirm if the condition given in~\eqref{MDCPhi} is satisfied. In order to satisfy the capacity constraint~\eqref{MDCcapacityconstraint2} the above solution is guaranteed to exist. The proof is completed by validating that with the strategy above, $\bar{\lambda}$ solving the Lagrange dual is satisfying $\Phi(\bar{\lambda}) = K$.
\end{proof}

\subsection{Structure of Nash equilibria}
In this subsection we provide insight into the structure of the Nash equilibria of the content placement games. We know from the previous subsection that both games are potential games. Hence, the Nash equilibria for the games correspond to the optimal placement strategies satisfy the solutions of the dual problems of Problem~\ref{LCprb} and Problem~\ref{MDCprb}, respectively.

\begin{cor}
Let $\bar{\mathbf{B}}$ denote a placement strategy at a Nash equilibrium of the content placement game for LC mechanism. Then, $\bar{\mathbf{B}}$ satisfies~\eqref{LCoptsoleq} $\forall m = 1, \dots, N$.
\end{cor}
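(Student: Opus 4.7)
The plan is to unwind the definition of Nash equilibrium for the content placement game and invoke Theorem~\ref{LCsol} cache by cache. By definition, at a Nash equilibrium $\bar{\mathbf{B}}$ no cache can strictly lower its own payoff $f^{(m)}_{LC}$ by a unilateral deviation; equivalently, for every $m\in[1:N]$, the block $\bar{\mathbf{B}}^{(m)}$ is a minimizer of $f^{(m)}_{LC}(\mathbf{B}^{(m)},\bar{\mathbf{B}}^{(-m)})$ subject to the capacity constraint~\eqref{capacityconstraint2} and the feasibility constraints~\eqref{sizeconstraint2}. This is exactly Problem~\ref{prb:LCloc} instantiated at $\mathbf{B}^{(-m)}=\bar{\mathbf{B}}^{(-m)}$.

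First I would fix an arbitrary cache index $m$ and freeze the remaining placements at $\bar{\mathbf{B}}^{(-m)}$. This freezes the quantities $\zeta^{(m)}(j,q)=\prod_{\ell\in s\setminus\{m\}}(1-\bar{b}_{j,q}^{(\ell)})$ appearing in~\eqref{eq:zeta}, so that the per-tuple weights
\[
\sum_{\substack{s\in\Theta\\ m\in s}}\sum_{\rho=q}^R p_s\,a_{j,\rho,s}\,w_{j,q}\,\zeta^{(m)}(j,q)
\]
used to build the ordering $\pi_m$ are well-defined constants depending only on $\bar{\mathbf{B}}^{(-m)}$. I would then invoke Theorem~\ref{LCsol} to conclude that every optimal $\mathbf{B}^{(m)}$ for this instance of Problem~\ref{prb:LCloc} has the knapsack structure~\eqref{LCoptsoleq} for some sort $\pi_m$ of $\Omega$. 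Since $\bar{\mathbf{B}}^{(m)}$ is itself such an optimum, by the Nash property, it must obey~\eqref{LCoptsoleq}. As $m$ was arbitrary, the claim holds for all $m=1,\dots,N$.

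The main subtlety to acknowledge is that the best response need not be unique: if two tuples of $\Omega$ attain identical weights at the capacity boundary, the ordering $\pi_m$ is ambiguous and Theorem~\ref{LCsol} really characterizes the entire set of optima up to this tie-breaking. The statement~\eqref{LCoptsoleq} must therefore be read as asserting the existence of a valid sort $\pi_m$ (consistent with $\bar{\mathbf{B}}^{(-m)}$) for which $\bar{\mathbf{B}}^{(m)}$ has the indicated form, rather than global uniqueness of $\pi_m$. A brief remark to this effect suffices, after which the corollary is an immediate consequence of the Nash condition plus Theorem~\ref{LCsol} applied componentwise.
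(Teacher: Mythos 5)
Your argument is correct and follows essentially the same route as the paper: the Nash condition makes each $\bar{\mathbf{B}}^{(m)}$ an optimizer of Problem~\ref{prb:LCloc} given $\bar{\mathbf{B}}^{(-m)}$, so the characterization of Theorem~\ref{LCsol} applies to every cache simultaneously. Your explicit remark about the non-uniqueness of the sort $\pi_m$ under ties is a sensible clarification that the paper leaves implicit, but it does not change the substance of the proof.
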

\begin{proof}
The proof is similar to the proof of Theorem~\ref{LCsol}, where in this case the optimal placement strategies must hold for all $1\leq m \leq N$ simultaneously.
\end{proof}

\begin{cor}
Let $\bar{\mathbf{B}}$ denote a placement strategy at a Nash equilibrium of the content placement game for MDC mechanism. Then, $\bar{\mathbf{B}}$ satisfies~\eqref{MDCoptsoleq} $\forall m = 1, \dots, N$.
\end{cor}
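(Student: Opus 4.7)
The plan is to leverage the best-response characterization of Theorem~\ref{MDCsol} together with the definition of a Nash equilibrium. By definition, at a Nash equilibrium $\bar{\mathbf{B}}$, no cache $m$ can strictly reduce its local cost $f^{(m)}_{MDC}(\cdot,\bar{\mathbf{B}}^{(-m)})$ by unilaterally changing $\bar{\mathbf{B}}^{(m)}$. Equivalently, for every $m$, the vector $\bar{\mathbf{B}}^{(m)}$ is an optimal solution of Problem~\ref{prb:MDCloc} when the opponents' placements are frozen at $\bar{\mathbf{B}}^{(-m)}$. The goal is then to pipe this through Theorem~\ref{MDCsol} to conclude that~\eqref{MDCoptsoleq} holds for each $m$.

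First I would fix an arbitrary cache $m$ and form the quantities $\zeta^{(m)}(j,q)$ using $\bar{\mathbf{B}}^{(-m)}$, exactly as in~\eqref{eq:zeta}. Since $\bar{\mathbf{B}}^{(m)}$ solves Problem~\ref{prb:MDCloc} for these fixed $\zeta^{(m)}$'s, Theorem~\ref{MDCsol} applies verbatim and yields the characterization~\eqref{MDCoptsoleq} for this particular $m$, with the ordering $\pi_m$ and threshold $KQ/w_j$ as in that theorem. Repeating the argument for every $m \in [1:N]$ gives the ``$\forall m$'' conclusion.

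The one subtlety worth flagging explicitly is that unlike in the single-cache best response, the $\zeta^{(m)}$'s at a Nash equilibrium are not exogenous: each $\zeta^{(m)}$ depends on $\bar{\mathbf{B}}^{(\ell)}$ for $\ell\in s\setminus\{m\}$, and those neighbouring placements are in turn determined by their own threshold rules. So the $N$ instances of~\eqref{MDCoptsoleq} form a coupled system of fixed-point equations, where the orderings $\pi_m$ and the indicator functions $\mathbbm{1}(D_\rho - Q + \sum_i \zeta^{(m)}(j,i) > 0)$ are consistent across caches. The proof does not need to construct such a fixed point (existence is guaranteed by the potential-game convergence result of the preceding theorem), only to record that any fixed point must satisfy~\eqref{MDCoptsoleq} simultaneously at every $m$.

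The main obstacle is essentially bookkeeping rather than a new idea: one must verify that the reduction to Problem~\ref{prb:MDCloc} is legitimate even though $f^{(m)}_{MDC}$ contains the $\max\{0,\cdot\}$ nonlinearity through $y_{j,\rho,s}$. This is already handled inside Theorem~\ref{MDCsol}, where the Lagrangian analysis absorbs the nonlinearity into the indicator in~\eqref{MDCPhi}; the corollary inherits this treatment without modification. I would therefore write the proof as a brief pointer: ``The proof parallels that of Theorem~\ref{MDCsol}; at a Nash equilibrium, the optimality conditions~\eqref{MDCoptsoleq} must hold simultaneously for every $1\leq m \leq N$, with the $\zeta^{(m)}$ quantities evaluated at $\bar{\mathbf{B}}^{(-m)}$.''
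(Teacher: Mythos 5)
Your proposal is correct and matches the paper's argument: the paper likewise reduces the corollary to Theorem~\ref{MDCsol} applied at each cache, noting only that the optimality conditions~\eqref{MDCoptsoleq} must hold simultaneously for all $1 \leq m \leq N$ at a Nash equilibrium. Your additional remark about the coupled fixed-point nature of the $\zeta^{(m)}$ quantities is a useful clarification but does not change the substance of the argument.
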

\begin{proof}
The proof is similar to the proof of Theorem~\ref{MDCsol}, where in this case the optimal placement strategies must hold for all $1\leq m \leq N$ simultaneously.
\end{proof}

\section{Numerical Evaluation}
\label{sec:performance}
In this section we will present practical implementations of our algorithm for the content placement games for LC and MDC mechanisms and evaluate our theoretical results according to a network of caches with their geographical locations following a real wireless network.
\subsection{The ROBR algorithm}
We will use the Random Order Best Response (ROBR) algorithm~\cite{AGS17} for the content placement games for both coding mechanisms to minimize the residual bandwidth. The basic idea of our algorithm is to repeatedly perform best response dynamics presented in Section~\ref{subsec:BR}. For ROBR algorithm, at each iteration step, a random cache is chosen uniformly from the set $[1:N]$ and updated by applying best response dynamics. We assume that all caches are initially empty (\ie $\mathbf{B}^{(m)} = 0_{Q,J}$, where $0_{Q,J}$ represents a $Q\times J$ zero matrix, $\forall m\in[1:N]$.). The algorithm stops when $f^{(m)}_{LC}\left(\mathbf{B}^{(m)},\mathbf{B}^{(-m)}\right)$ converges for LC mechanism, and $f^{(m)}_{MDC}\left(\mathbf{B}^{(m)},\mathbf{B}^{(-m)}\right)$ converges for MDC mechanism. ROBR algorithm is shown in Algorithm~\ref{alg:ROBR}.
\begin{figure}[!htb]
\removelatexerror
  \begin{algorithm}[H]
\label{alg:ROBR}
   \caption{Random Order Best Response (ROBR)}
   initialize $\mathbf{B}^{(m)} = 0_{Q,J}$, $\forall m \in [1:N]$\;
   set $imp(m) = 1$, $\forall m \in [1:N]$ \;
   set $\mathbf{imp} = [imp(1), \dots, imp(N)]$\;
   \While {$\mathbf{imp} \neq \mathbf{0}$}
   {
      m = Uniform(N)\;
      Set $imp(m) = 0$\;
      Solve Problem~\ref{prb:LCloc}* for cache $m$ and find $\mathbf{\bar{B}}^{(m)}$ using the information coming from neighbours\;
      Compute $f_{LC}^{(m)}(\mathbf{\bar{B}}^{(m)}, \mathbf{B}^{(-m)})$*\;
      \If{$f_{LC}^{(m)}(\mathbf{\bar{B}}^{(m)}, \mathbf{B}^{(-m)}) - f_{LC}^{(m)}(\mathbf{B}^{(m)}, \mathbf{B}^{(-m)}) \neq 0$*}
	{
		$imp(m) = 1$
	}
}
  \end{algorithm}
*For MDC mechanism, the algorithm solves Problem~\ref{prb:MDCloc} and uses the performance measure $f_{MDC}^{(m)}(\mathbf{\bar{B}}^{(m)}, \mathbf{B}^{(-m)})$.
\end{figure}

\subsection{A real wireless network: Berlin network}
In this section we will evaluate our theoretical results for the topology of a real wireless network. We have taken the positions of the base stations provided by the OpenMobileNetwork project~\cite{openmobilenetwork}. The base stations are located in the area $1.95 \times 1.74\text{ }km$s around the TU-Berlin campus. We will consider these base stations as our caches with certain cache capacities. The coverage radius of the base stations is equal to $r = 700\text{ }m$. The locations of the base stations is shown in Figure~\ref{fig:BerlinMap}.

\begin{figure}
\centering
\includegraphics[width=0.5\columnwidth]{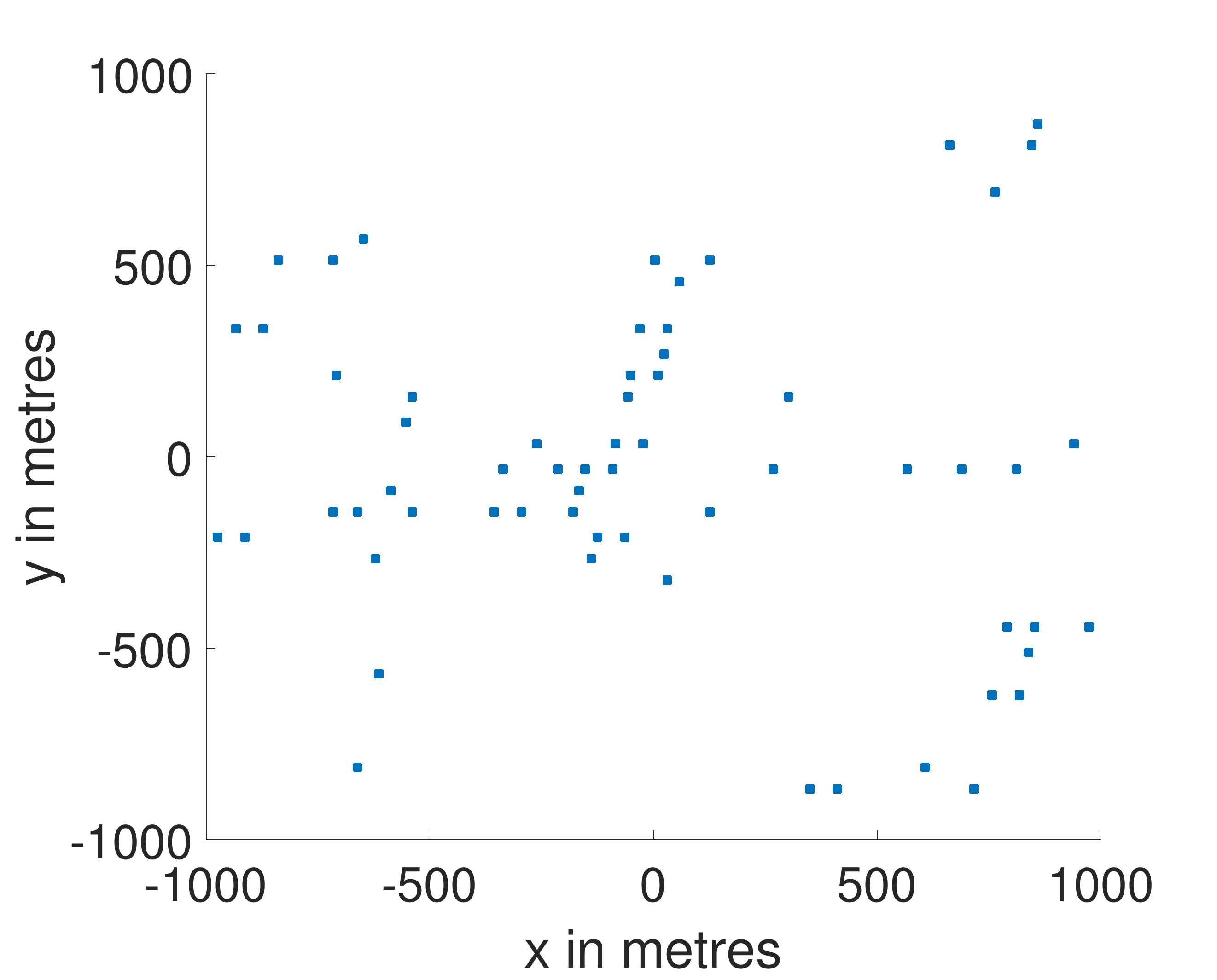}
\caption{Locations of Base Stations from OpenMobileNetwork dataset.}
\label{fig:BerlinMap}
\end{figure}

We consider the content library of size $J = 200$. We assume a Zipf distribution for the video popularities, setting $\gamma = 1$ and taking $a_j$ according to~\eqref{zipfpars}. We consider the case where videos have $R = 5$ different video qualities. We use the video quality bandwidth requirements and the base size data given in~\cite{KBZ13} for the chunk sizes, where the base size is the average of the sizes of a large set of tracked videos in certain video qualities given in megabytes per minute (MB/min). We assume that videos in the content library are all 40 minutes long. The corresponding video quality specifications is shown in Table~\ref{tab:spec}. The first column shows the video quality index $\rho$, the second column shows the base sizes of the videos and the third column shows the sizes of the 40 minutes long videos with different video qualities.

\begin{table}
\caption{Video quality specifications}
\centering
    \begin{tabular}{ | c | c | c |}
    \hline
$\begin{matrix}\text{Video quality}\\(\rho)\end{matrix}$& $\begin{matrix}\text{Base Size}\\\text{(MB/min)}\end{matrix}$&$\begin{matrix}\text{Video Size}\\\text{(GB)}\end{matrix}$\\
\hline
1 (240p)&2.56&0.1024\\
\hline
2 (360p)&4.30&0.1720\\
\hline
3 (480p)&6.79&0.2716\\
\hline
4 (720p)&15.49&0.6196\\
\hline
5 (1080p)&32.70&1.3080\\
    \hline
    \end{tabular}
\label{tab:spec}
\end{table}

The videos are partitioned into $Q_{LC} = 5$ chunks by LC mechanism. From the third column of Table~\ref{tab:spec}, it immediately follows that $w_{j,1} = 102.4$ MB, $w_{j,2} = 69.6$ MB, $w_{j,3} = 99.6$ MB, $w_{j,4} = 348.0$ MB, and $w_{j,5} = 688.4$ MB, $\forall j$. By MDC mechanism, the videos are partitioned into $Q_{MDC} = 38$ chunks and $w_{j,q} = 34.42$ MB $\forall j,q$. Consequently, $D_1 = 3$, $D_2 = 5$, $D_3 = 8$, $D_4 = 18$, and $D_5 = 38$. Note that the selection of $Q_{MDC}$ is based on the least common multiple of the third column of Table~\ref{tab:spec}. Finally, we set $K = 6.54$ GB, \ie each cache can store five 1080p movies.

For the distribution of the requested video qualities, we have
\begin{align}
\label{eq:fsrho}
f^{(s)}(\rho)  = \left\{
\begin{array}{rl}
\frac{1+t_\rho\theta_s}{5+10\theta_s}, & \text{if } \rho\in{1,2,3,4,5},\\
0, & \text{otherwise},\\
\end{array} \right.
\end{align}
where $t_\rho$ is the parameter representing the direction of the increment of the requested video qualities and $\theta_s$ is the slope of the linear increment.

\begin{figure*}
\centering
\includegraphics[width=1\columnwidth]{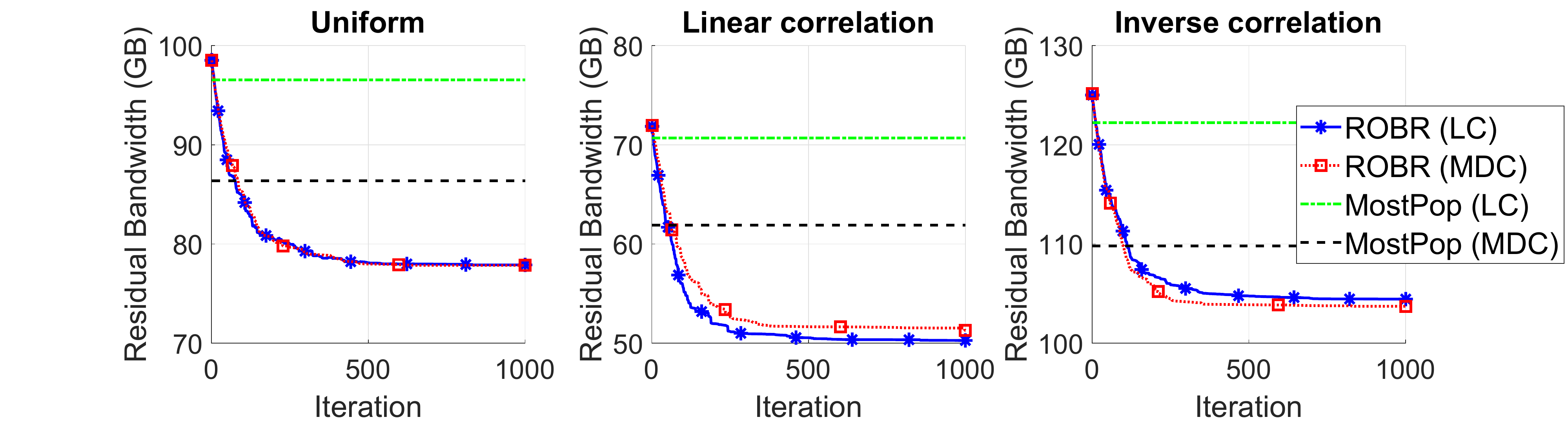}
\caption{The residual bandwidth evolution.}
\label{fig:RBws}
\end{figure*}

We will now evaluate our theoretical results for three different video quality trends.
\subsubsection{Uniform} For this trend, we set $\theta_s = 0$, $\forall s\in\Theta$. This trend reflects the case where users request videos in all qualities equally likely throughout the whole network.

\subsubsection{Linear correlation} For this trend, we set $t_\rho = 5 - \rho$ for $\vert s \vert < \mathbb{E}[\vert s \vert]$, and $t_\rho = \rho - 1$ for $\vert s \vert > \mathbb{E}[\vert s \vert]$, and $\theta_s = 0.1\left\vert\vert s \vert - \mathbb{E}[\vert s \vert]\right\vert$, where $\vert s \vert$ is the cardinality of the subset $s\in\Theta$, $\mathbb{E}[\vert s \vert]$ is the mean of the cardinalities of the all subsets, and $\left\vert\vert s \vert - \mathbb{E}[\vert s \vert]\right\vert$ is the absolute value of the difference of them. This trend reflects the case where users covered by many caches request videos in higher quality.

\subsubsection{Inverse correlation} We use the same parameters as in linear correlation case with the reversed $t_\rho$'s, \ie $t_\rho = \rho - 1$ for $\vert s \vert < \mathbb{E}[\vert s \vert]$, and $t_\rho = 5 - \rho$ for $\vert s \vert > \mathbb{E}[\vert s \vert]$. This trend reflects the case where users covered by many caches request videos in lower quality.

In Figure~\ref{fig:RBws} the residual bandwidth evolution for three different video quality trends for the proposed algorithm for both coding mechanisms is shown. In all video quality trends, storing the most popular content for MDC mechanism gives a lower residual bandwidth compared to LC mechanism since chunks have identical size and any received chunk increases the video quality for MDC. However, after running ROBR, the advantage of using MDC vanishes as both coding mechanisms give the same residual bandwidth performance. One might easily think that MDC would perform better than LC for a VoD system since any received chunk would increase the video quality. However, we see that LC can perform as good as MDC by applying ROBR and placing the chunks optimally. This is a crucial observation due to following reason. From technical point of view, at a given bitrate budget, there is a quality loss in MDC with respect to LC and chunks are more difficult to encode~\cite{V07}. In fact~\cite{WRL05} shows that some serious portion of the bitrate budget is reserved for forward error correction in MDC and the actual residual bandwidth will be higher for MDC to obtain the exact same quality. Finally, the residual bandwidth performance is the best for the linear correlation case as the requested video quality increases as the number of caches that users can connect to increases.
\section{Discussion and Conclusion}
\label{sec:discussion}
In this paper we have provided a low-complexity asynchronously distributed cooperative caching algorithm for VoD when there is communication only between caches with overlapping coverage areas. We have related our algorithm to a best response dynamics in a game and have shown with practical examples that it converges rapidly. We have demonstrated the residual bandwidth evolution on a real network and have shown that our algorithm diminishes the advantage of using MDC over LC mechanism. Due to technical details, MDC suffers from quality loss since the chunks are more difficult to encode. Hence, we conclude that using LC for video partitioning is better for a VoD system as long as ROBR is applied for the optimal placement of the video chunks.

In future work we will look into this problem from the users' perspective and focus on maximizing the utilities of the users.


\end{document}